\DeclarePairedDelimiter\ceil{\lceil}{\rceil}
\newtheorem{theorem}{Theorem}
\newtheorem{example}{Example}
\newtheorem{lemma}{Lemma}
\newtheorem{remark}{Remark}
\theoremstyle{definition}
\DeclareMathOperator*{\argmin}{arg\,min}
\begin{document}
	\newgeometry{left=0.7in,right=0.7in,top=.5in,bottom=1in}
	\title{Private Variable-Length Coding with Non-zero Leakage}
	%\title{Bounds for Multi-User Privacy-Utility Trade-off with Non-zero Leakage}
\vspace{-5mm}
\author{
		\IEEEauthorblockN{Amirreza Zamani, Tobias J. Oechtering, Mikael Skoglund \vspace*{0.5em}
			\IEEEauthorblockA{\\
                             Division of Information Science and Engineering, KTH Royal Institute of Technology \\
                              %$^\dagger$Nokia Bell Labs, Holmdel, NJ, USA\\
				Email: \protect amizam@kth.se, oech@kth.se, skoglund@kth.se }}\vspace*{-2.5em}
		}
	\maketitle
%
%\iffalse
\begin{abstract}
	A private compression design problem is studied, where an encoder observes useful data $Y$, wishes to compress it using variable length code and communicates it through an unsecured channel. Since $Y$ is correlated with private data $X$, the encoder uses a private compression mechanism to design encoded message $\cal C$ and sends it over the channel. An adversary is assumed to have access to the output of the encoder, i.e., $\cal C$, and tries to estimate $X$.
	Furthermore, it is assumed that both encoder and decoder have access to a shared secret key $W$. In this work, we generalize the perfect privacy (secrecy) assumption and consider a non-zero leakage between the private data $X$ and encoded message $\cal C$. The design goal is to encode message $\cal C$ with minimum possible average length that satisfies non-perfect privacy constraints. %Different privacy metrics such as mutual information and per-letter measure are considered.  
	
	We find upper and lower bounds on the average length of the encoded message using different privacy metrics and study them in special cases. For the achievability we use two-part construction coding and extended versions of Functional Representation Lemma. Lastly, in an example we show that the bounds can be asymptotically tight.

\end{abstract}
%\fi
\section{Introduction}
In this paper, random variable (RV) $Y$ denotes the useful data and is correlated with the private data denoted by RV $X$. An encoder wishes to compress $Y$ and communicates it to a user over an unsecured channel. The encoded message is described by RV $\cal{C}$.  As shown in Fig.~\ref{ITWsys}, it is assumed that an adversary has access to the encoded message $\mathcal{C}$, and wants to extract information about $X$. Moreover, it is assumed that the encoder and decoder have access to a shared secret key denoted by RV $W$ with size $M$. The goal is to design encoded message $\cal C$, which compresses $Y$, using a variable length code with minimum possible average length that satisfies certain privacy constraints. We utilize techniques used in privacy mechanism and compression design problems and combine them to build such $\cal C$.
In this work, we extend previous existing results \cite{kostala,kostala2}, by generalizing the perfect privacy constraint and allowing non-zero leakage. We use different privacy leakage constraints, e.g., the mutual information between $X$ and $\cal C$ equals to $\epsilon$, strong privacy constraint and bounded per-letter privacy criterion. Here, we introduce an approach and apply it to a lossless data compression problem.    
%The amount of data generated by interconnected sensors that record and process signals from the physical world, software system, humans and robots is growing rapidly. Direct disclosure of the raw data can cause privacy threats through unauthorized inferences. Thus, privacy mechanisms are required for data disclosure.

%In particular, we allow bounded privacy leakage, i.e., $I(X;U)\leq\epsilon$.
%\\
Recently, the privacy mechanism and compression design problems are receiving increased attention
%The privacy mechanism design problem is receiving increased attention in information theory recently. 
%Related works can be found in 
\cite{shannon, gunduz2010source, schaefer, sankar, Calmon2, Total,deniz3, yamamoto, issa, makhdoumi,borz,khodam,Khodam22,kostala, kostala2, king1, king2,asoodeh1}. 
Specifically, in \cite{shannon}, a notion of perfect secrecy is introduced by Shannon where the public data and private data are statistically independent. %Furthermore, Shannon cipher system is analyzed where one of $M$ messages is sent over a channel wiretapped by an eavesdropper who tries to guess the message. As shown in \cite{shannon}, perfect secrecy is achievable if and only if the shared secret key length is at least $M$. 
%Perfect secrecy, where public data and private data are independent, is also used in \cite{dworkal,dwork1}, where sanitized version of a data base is disclosed for a public use. 
Equivocation as a measure of information leakage for information theoretic security has been used in \cite{gunduz2010source, schaefer, sankar}. %A rate-distortion approach to information theoretic secrecy is studied in \cite{yamamoto1988rate}.   
Fundamental limits of the privacy utility trade-off measuring the leakage using estimation-theoretic guarantees are studied in \cite{Calmon2}. A privacy design with total variation as privacy measure has been studied in \cite{Total}. A related source coding problem with secrecy is studied in \cite{yamamoto}.
%The concept of maximal leakage has been introduced in \cite{issa} and some bounds on the privacy utility trade-off have been derived. 
The concept of privacy funnel is introduced in \cite{makhdoumi}, where the privacy utility trade-off has been studied considering the log-loss as privacy measure and a distortion measure for utility. %and the log-loss as 
The privacy-utility trade-offs considering equivocation and expected distortion as measures of privacy and utility are studied in both \cite{sankar} and \cite{yamamoto}.

In \cite{borz}, the problem of privacy-utility trade-off considering mutual information both as measures of utility and privacy is studied. It is shown that under the perfect privacy assumption, the privacy mechanism design problem can be obtained by a linear program. %This has been extended in \cite{gun} considering the privacy utility trade-off with a rate constraint on the disclosed data.
\begin{figure}[]
	\centering
	\includegraphics[scale = .5]{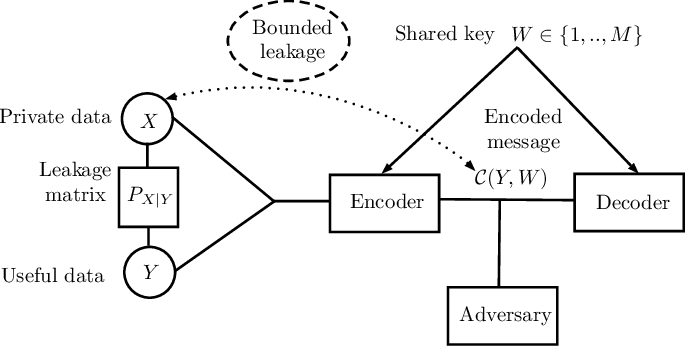}
	\caption{In this work an encoder wants to compress $Y$ which is correlated with $X$ under certain privacy leakage constraints and send it over a channel where an eavesdropper has access to the output of the encoder.} %The encoder and decoder have an advantage of using shared secret key.}
	\label{ITWsys}
\end{figure}
Moreover, in \cite{borz}, it has been shown that information can be only revealed if the kernel (leakage matrix) between useful data and private data is not invertible. In \cite{khodam}, the work \cite{borz} is generalized by relaxing the perfect privacy assumption allowing some small bounded leakage. More specifically, the privacy mechanisms with a per-letter privacy criterion considering an invertible kernel are designed allowing a small leakage. This result is generalized to a non-invertible leakage matrix in \cite{Khodam22}.\\
In \cite{kostala}, an approach to partial secrecy that is called \emph{secrecy by design} has been introduced and is applied to two information processing problems: privacy mechanism design and lossless compression. For the privacy design problem, bounds on privacy-utility trade-off are derived by using the Functional Representation Lemma. These results are derived under the perfect secrecy assumption.
%In \cite{kostala}, the problem of \emph{secrecy by design} is studied and bounds on privacy-utility trade-off for two scenarios where the private data is hidden or observable are derived by using the Functional Representation Lemma. These results are derived under the perfect secrecy assumption, i.e., no leakages are allowed. %The bounds are tight when the private data is a deterministic function of the useful data.
In \cite{king1}, the privacy problems considered in \cite{kostala} are generalized by relaxing the perfect secrecy constraint and allowing some leakages. %More specifically, we considered bounded mutual information, i.e., $I(U;X)\leq \epsilon$ for privacy leakage constraint.
%Furthermore, in the special case of perfect privacy we derived a new upper bound for the perfect privacy function and it has been shown that this new bound generalizes the bound in \cite{kostala}. Moreover, it has been shown that the bound is tight when $|\mathcal{X}|=2$.\\
In \cite{king2}, the privacy-utility trade-off with two different per-letter privacy constraints is studied. %Upper and lower bounds are derived and it has been shown that the bounds in the first scenario, where the private data is hidden to the agent, are asymptotically optimal when the private data is a deterministic function of useful data. \\
Moreover, in \cite{kostala}, the problems of fixed length and variable length compression have been studied and upper and lower bounds on the average length of encoded message have been derived. These results are derived under the assumption that the private data is independent of the encoded message. %A similar approach has been used in \cite{kostala2}, where in a lossless compression problem the relations between secrecy, shared key and compression considering perfect secrecy, secrecy by design, maximal leakage, mutual information leakage and local differential privacy have been studied.

Our problem here is closely related to \cite{kostala}, %where for the problem of lossless data compression, strong information theoretic guarantees are provided and fundamental limits are characterized when the private data is a deterministic function of the useful data. In this paper, 
where we generalize the variable length lossless compression problem considered in \cite{kostala} by removing the assumption that $X$ and $\cal C$ are independent, i.e., $I(X;\mathcal{C}(Y,W))=0$, and therefore allowing small leakage. 

To this end we combine the privacy design techniques used in \cite{king1}, and \cite{king2}, which are based on extended versions of Functional Representation Lemma (FRL) and Strong Functional Representation Lemma (SFRL), as well as the lossless data compression design in \cite{kostala}. FRL and SFRL are constructive lemmas that are valuable for the design. We find lower and upper bounds on the average length of the encoded message $\cal C$ and study them in different scenarios. For the achievability, we use two-part construction coding. %where in the first part we use one-time-pad coding to compress $X$ and in the second part we compress the output of the extended versions of FRL and SFRL. 
In an example we show that the obtained bounds can be asymptotically tight.
%In Section~\ref{m}, we study a benefit of the problem considered in this paper by proposing another setup. Finally, in Appendix~B we relax the equality constraint that is used in measuring the leakage by using a bounded leakage constraint and obtain upper and lower bounds which improve the bounds in \cite{kostala}. 
Furthermore, in case of perfect privacy, the existing bounds found in \cite{kostala} are improved considering the case where $X=(X_1,X_2)$ and $X_2$ is a deterministic function of $X_1$.

\section{system model and Problem Formulation} \label{sec:system}
Let $P_{XY}$ denote the joint distribution of discrete random variables $X$ and $Y$ defined on alphabets $\cal{X}$ and $\cal{Y}$. We assume that cardinality $|\mathcal{X}|$ is finite and $|\mathcal{Y}|$ is finite or countably infinite.
We represent $P_{XY}$ by a matrix defined on $\mathbb{R}^{|\mathcal{X}|\times|\mathcal{Y}|}$ and
marginal distributions of $X$ and $Y$ by vectors $P_X$ and $P_Y$ defined on $\mathbb{R}^{|\mathcal{X}|}$ and $\mathbb{R}^{|\mathcal{Y}|}$ given by the row and column sums of $P_{XY}$.  
The relation between $X$ and $Y$ is given by the leakage matrix $P_{X|Y}$ defined on $\mathbb{R}^{|\mathcal{X}|\times|\cal{Y}|}$.
The shared secret key is denoted by discrete RV $W$ defined on $\{1,..,M\}$ and is assumed to be accessible by both encoder and decoder. Furthermore, we assume that $W$ is uniformly distributed and is independent of $X$ and $Y$. 
A prefix-free code with variable length and shared secret key of size $M$ is a pair of mappings:
\begin{align*}
(\text{encoder}) \ \mathcal{C}: \ \mathcal{Y}\times \{1,..,M\}\rightarrow \{0,1\}^*\\
(\text{decoder}) \ \mathcal{D}: \ \{0,1\}^*\times \{1,..,M\}\rightarrow \mathcal{Y}.
\end{align*}
The output of the encoder $\mathcal{C}(Y,W)$ describes the encoded message. 
%Since the code is prefix free, any codeword in the image of $\cal C$ is not a prefix of other codewords. 
The variable length code $(\mathcal{C},\mathcal{D})$ is lossless if 
\begin{align}
\mathbb{P}(\mathcal{D}(\mathcal{C}(Y,W),W)=Y)=1.
\end{align}  
Similar to \cite{kostala}, we define $\epsilon$-private, strongly $\epsilon$-private and point-wise $\epsilon$-private codes.
The code $(\mathcal{C},\mathcal{D})$ is \textit{$\epsilon$-private} if
\begin{align}
I(\mathcal{C}(Y,W);X)=\epsilon.\label{lash1}
\end{align}
Moreover, the code $(\mathcal{C},\mathcal{D})$ is \textit{bounded $\epsilon$-private} if
\begin{align}
I(\mathcal{C}(Y,W);X)\leq\epsilon.\label{lash5}
\end{align}
%Moreover, the code $(\mathcal{C},\mathcal{D})$ is \textit{strongly $\epsilon$-private} if 
%\begin{align}
%I(\mathcal{C}(Y,W);X)&=\epsilon,\label{lash2}\\
%I(\mathcal{C}(Y,W);X|Y,W)&\leq \alpha H(X|Y)\nonumber\\&+(1-\alpha)(\log(I(X;Y)+1)+4),\label{lash3}
%\end{align}
%where $\alpha=\frac{\epsilon}{H(X)}$. 
%Let $\xi$ be the support of $\mathcal{C}(W,Y)$ and for a given $c\in\xi$ let $P_{X,\mathcal{C}}(\cdot,c)$ defined on $\mathbb{R}^{|\mathcal{X}|}$ be a distribution vector with elements $P_{X,\mathcal{C}}(x,c)$ for all $x\in\cal X$ and $c\in \xi$.
%Finally, the code $(\mathcal{C},\mathcal{D})$ is \textit{point-wise $\epsilon$-private} if 
%\begin{align}\label{haroomzade}
%d(P_{X,\mathcal{C}}(\cdot,c),P_X(\cdot)P_{\mathcal{C}}(c))\leq\epsilon,\ \forall c\in \xi,
%\end{align}
%where $d(P,Q)$ corresponds to the total variation distance between two distributions $P$ and $Q$, i.e., $d(P,Q)=\sum_x |P(x)-Q(x)|$. 
Let $\xi$ be the support of $\mathcal{C}(W,Y)$. For any $c\in\xi$ let $\mathbb{L}(c)$ be the length of the codeword. The code $(\mathcal{C},\mathcal{D})$ is of \textit{$(\alpha,M)$-variable-length} if 
\begin{align}\label{jojo}
\mathbb{E}(\mathbb{L}(\mathcal{C}(Y,w)))\leq \alpha,\ \forall w\in\{1,..,M\}.
\end{align} 
Finally, let us define the sets $\mathcal{H}^{\epsilon}(\alpha,M)$, and $\mathcal{H}^{\epsilon}_b(\alpha,M)$, as follows: %and $\mathcal{H}^{p\epsilon}(\alpha,M)$ as follows: 
$\mathcal{H}^{\epsilon}(\alpha,M)\triangleq\{(\mathcal{C},\mathcal{D}): (\mathcal{C},\mathcal{D})\ \text{is}\ \epsilon\text{-private and of}\ (\alpha,M)\text{-variable-length}  \}$, and $\mathcal{H}^{\epsilon}_b(\alpha,M)\triangleq\{(\mathcal{C}',\mathcal{D}'): (\mathcal{C}',\mathcal{D}')\ \text{is bounded}\ \epsilon\text{-private and of}\ (\alpha,M)\text{-variable-length}  \}$. %and $\mathcal{H}^{p\epsilon}(\alpha,M)\triangleq\{(\mathcal{C},\mathcal{D}): (\mathcal{C},\mathcal{D})\ \text{is point-wise} \epsilon\text{-private and}\ (\alpha,M)\text{-variable-length}  \}$.
%\begin{align}
%&\mathcal{H}^{\epsilon}(\alpha,M)\triangleq\nonumber\\&\{(\mathcal{C},\mathcal{D}): \text{$(\mathcal{C},\mathcal{D})$ is $\epsilon$-private and $(\alpha,M)$-variable-length}  \},\\
%&\mathcal{H}^{s\epsilon}(\alpha,M)\triangleq\nonumber\\&\{(\mathcal{C},\mathcal{D}): \text{$(\mathcal{C},\mathcal{D})$ is strong $\epsilon$-private and $(\alpha,M)$-variable-length}  \},\\
%&\mathcal{H}^{p\epsilon}(\alpha,M)\triangleq\nonumber\\&\{(\mathcal{C},\mathcal{D}): \text{$(\mathcal{C},\mathcal{D})$ is point-wise $\epsilon$-private and $(\alpha,M)$-variable-length}  \}.
%\end{align}
The private compression design problems can be then stated as follows
\begin{align}
\mathbb{L}(P_{XY},M,\epsilon)&=\inf_{\begin{array}{c} 
	\substack{(\mathcal{C},\mathcal{D}):(\mathcal{C},\mathcal{D})\in\mathcal{H}^{\epsilon}(\alpha,M)}
	\end{array}}\alpha,\label{main1}\\
\mathbb{L}^b(P_{XY},M,\epsilon)&=\inf_{\begin{array}{c} 
	\substack{(\mathcal{C},\mathcal{D}):(\mathcal{C},\mathcal{D})\in\mathcal{H}^{\epsilon}_b(\alpha,M)}
	\end{array}}\alpha.\label{main2}
%\mathbb{L}^{s}(P_{XY},M,\epsilon)&=\inf_{\begin{array}{c} 
%	\substack{(\mathcal{C},\mathcal{D}):(\mathcal{C},\mathcal{D})\in\mathcal{H}^{s\epsilon}(\alpha,M)}
%	\end{array}}\alpha,\label{main2}\\
%\mathbb{L}^{p}(P_{XY},M,\epsilon)&=\inf_{\begin{array}{c} 
%	\substack{(\mathcal{C},\mathcal{D}):(\mathcal{C},\mathcal{D})\in\mathcal{H}^{p\epsilon}(\alpha,M)}
%	\end{array}}\alpha,\label{main3}
\end{align} 
Clearly, we have $\mathbb{L}^b(P_{XY},M,\epsilon)\leq \mathbb{L}(P_{XY},M,\epsilon)$ and $\mathbb{L}^b(P_{XY},M,\epsilon)\leq \mathbb{L}(P_{XY},M,0)$.
\begin{remark}
	\normalfont 
	By letting $\epsilon=0$, both \eqref{main1} and \eqref{main2} lead to the privacy-compression rate trade-off studied in \cite{kostala}. %where upper and lower bounds have been derived. 
	In this paper, we generalize the trade-off by considering a non-zero $\epsilon$.
\end{remark}
\begin{remark}
	The problem defined in \eqref{main1} can be studied using a different setup. To see a benefit consider the following scenario: % Let RVs $X$ and $Y$ be arbitrarily correlated.
	 Assume that user 1 seeks for $Y$ and we wish to compress $Y$ and send the encoded message denoted by $U$ to User 1 with a minimum possible average length. Moreover, User 2 asks for a certain amount of information about $X$. %(output of the encoder).
	%Thus, we need to fulfill a utility constraint corresponding to user 2. 
	For instance, let the utility of User 2 be measured by the mutual information between $X$ and the encoded message $U$, i.e., $I(U;X)\geq \Delta$. %We assume that user 1 and the encoder have access to a shared key, but user 2 does not have access to it. We consider the case where user 1 should be able to recover $Y$ without loss based on the encoded message and the shared key, and the code sent by the encoder is a variable length code. 
	In this scenario, Theorem~\ref{the1} and Theorem~\ref{the2} can be useful to find lower and upper bounds.
\end{remark}

 \section{Main Results}\label{sec:resul}
 In this section, we derive upper and lower bounds on $\mathbb{L}(P_{XY},M,\epsilon)$, and $\mathbb{L}^b(P_{XY},M,\epsilon)$, defined in \eqref{main1}, and \eqref{main2}. %and \eqref{main3}. 
 Next, we study the bounds in different scenarios, moreover, we provide an example that shows that the bounds can be tight. %To find the upper bounds we use two-part construction coding, which has been used in \cite{kostala}. We first encode the private data $X$, then we encode the RV found by the Extended Functional Representation Lemma (EFRL) \cite[Lemma~3]{king1}. 
 To do so let us recall ESFRL as follows.
 \begin{lemma}\label{EFRL}(EFRL \cite[Lemma~3]{king1}):
 	For any $0\leq\epsilon< I(X;Y)$ and pair of RVs $(X,Y)$ distributed according to $P_{XY}$, there exists a RV $U$ supported on $\mathcal{U}$ such that the leakage between $X$ and $U$ is equal to $\epsilon$, i.e., we have
 	$
 	I(U;X)= \epsilon,
 	$
 	$Y$ is a deterministic function of $(U,X)$, i.e., we have  
 	$
 	H(Y|U,X)=0,
 	$
 	and 
 	$
 	|\mathcal{U}|\leq \left[|\mathcal{X}|(|\mathcal{Y}|-1)+1\right]\left[|\mathcal{X}|+1\right].
 	$
 	Furthermore, if $X$ is a deterministic function of $Y$ we have
 	\begin{align}\label{prove}
 	|\mathcal{U}|\leq \left[|\mathcal{Y}|-|\mathcal{X}|+1\right]\left[|\mathcal{X}|+1\right].
 	\end{align}
 \end{lemma}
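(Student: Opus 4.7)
The plan is to build $U$ as a Cartesian product $U=(U_0,\tilde X)$, where $U_0$ carries the reconstruction information via the standard Functional Representation Lemma (FRL) and $\tilde X$ is a carefully designed noisy copy of $X$ that contributes exactly the prescribed $\epsilon$ of leakage. The product structure transparently yields the claimed cardinality bound, since $|\mathcal{U}_0|$ inherits the FRL cardinality bound $|\mathcal{X}|(|\mathcal{Y}|-1)+1$ and $\tilde X$ can be made to live on an alphabet of size $|\mathcal{X}|+1$.

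First I would invoke the classical FRL on $(X,Y)$ to obtain $U_0$ satisfying $U_0\perp X$, $H(Y|U_0,X)=0$, and $|\mathcal{U}_0|\le|\mathcal{X}|(|\mathcal{Y}|-1)+1$. In the special case where $X$ is a deterministic function of $Y$, I would instead apply the sharper FRL variant that yields $|\mathcal{U}_0|\le|\mathcal{Y}|-|\mathcal{X}|+1$, which accounts for the refined second bound stated in the lemma.

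To inject leakage I would construct an erasure-type noisy copy $\tilde X$ of $X$. Let $R\sim\mathrm{Bern}(\alpha)$ be an auxiliary coin independent of $(X,Y,U_0)$ with $\alpha=\epsilon/H(X)$, and set $\tilde X=X$ when $R=1$ and $\tilde X=\star$ (a fresh erasure symbol) when $R=0$. The hypothesis $\epsilon<I(X;Y)\le H(X)$ ensures $\alpha\in[0,1)$, so the construction is well defined; a direct computation gives $|\tilde{\mathcal{X}}|\le|\mathcal{X}|+1$ and $I(\tilde X;X)=\alpha H(X)=\epsilon$. Setting $U=(U_0,\tilde X)$ then bounds $|\mathcal{U}|$ by the product of the two factors.

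The core verification is the leakage identity. The equality $H(Y|U,X)=0$ is immediate from $H(Y|U_0,X)=0$ and data processing. For the leakage I would chain-rule as $I(U;X)=I(U_0;X)+I(\tilde X;X\mid U_0)=I(\tilde X;X\mid U_0)$, and then exploit two independencies: $\tilde X\perp U_0$, which follows from $U_0\perp X$ combined with the fact that $R$ was drawn fresh and independent of $(X,U_0)$; and $\tilde X\perp U_0\mid X$, because once $X$ is fixed, $\tilde X$ depends only on $R$, which is independent of $U_0$ given $X$. These collapse $I(\tilde X;X\mid U_0)$ to $I(\tilde X;X)=\epsilon$, as required. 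The main obstacle I anticipate is precisely this Markov bookkeeping: one must verify that grafting the erasure channel onto $X$ does not contaminate the independence $U_0\perp X$ supplied by the FRL, and the cleanest remedy is to insist that $R$ be sampled independently of everything already constructed.
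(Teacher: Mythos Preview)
Your proposal is correct and follows essentially the same construction as the paper: the paper (via \cite[Lemma~3]{king1}) also builds $U=(\tilde U,Z)$ where $\tilde U$ comes from the FRL and $Z$ is the erasure variable equal to $X$ with probability $\alpha=\epsilon/H(X)$ and to a fresh constant otherwise; your $U_0$ and $\tilde X$ are exactly these two pieces. The only new content in the paper's own proof is the observation that when $X$ is a deterministic function of $Y$ the FRL output satisfies the sharper bound $|\mathcal{\tilde U}|\le|\mathcal{Y}|-|\mathcal{X}|+1$, which you also invoke.
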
  
\begin{proof}
	%The proof can be found in \cite[Lemma~3]{king1} and includes an additional randomized response step as proposed in \cite{warner1965randomized}. 
	The only part which needs to be proved is \eqref{prove} and is provided in Appendix~A. 
\end{proof}
%Next, we recall a useful lemma from \cite{king1}. %which shows that there exists a RV $U$ that satisfies \eqref{t1} and \eqref{t2} and has bounded entropy. Lemma~\ref{aghab} has a constructive proof that can help us to find upper bounds on the privacy-compression rate trade-off. 
%\begin{lemma}\label{aghab} (\cite[Lemma~5]{king1})
%	For any pair of RVs $(X,Y)$ distributed according to $P_{XY}$ supported on alphabets $\mathcal{X}$ and $\mathcal{Y}$, where $|\mathcal{X}|$ is finite and $|\mathcal{Y}|$ is finite or countably infinite, there exists RV $U$ such that it satisfies $I(U;X)= \epsilon$, $H(Y|U,X)=0$, and
%	\begin{align}
%	H(U)\leq \sum_{x\in\mathcal{X}}H(Y|X=x)+\epsilon+h(\alpha)\label{moon}
%	\end{align}
%	with $\alpha=\frac{\epsilon}{H(X)}$ where $h(\cdot)$ denotes the binary entropy function.
%\end{lemma}
%\begin{proof}
%	The proof is provided in \cite[Lemma~5]{king1} and is based on the construction used in Lemma~\ref{EFRL}.
%\end{proof}
The following lemma helps us to find lower bounds on the entropy of a RV that satisfies $I(U;X)= \epsilon$ and $H(Y|U,X)=0$ considering the shared key.
\begin{lemma}\label{converse}
	Let the pair $(X,Y)$ be jointly distributed and RV $W$ be independent of $X$ and $Y$. Then, if RV $U$ satisfies 
	$
		I(U;X)= \epsilon,
	$
	and 
	$
	H(Y|U,X,W)=0,
	$
	then, we have
	\begin{align}
	H(U)\!\geq \!\max\{L_1(\epsilon,P_{XY}),L_2(\epsilon,P_{XY}), L_3(\epsilon,P_{XY})\},
	\end{align}
	where
	\begin{align}
	L_1(\epsilon,P_{XY})&=H(Y|X),\label{khar}\\
	L_2(\epsilon,P_{XY})&=\min_{x\in\mathcal{X}} H(Y|X)+\epsilon,\label{khar1}\\
	L_3(\epsilon,P_{XY})&=H(Y|X)-H(X|Y)+\epsilon\label{khar2},
	\end{align}
	and $\min_{x\in\mathcal{X}} H(Y|X=x)$ is the minimum conditional entropy which is non-zero.
\end{lemma}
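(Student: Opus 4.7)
The plan is to prove the three bounds $L_1, L_2, L_3$ separately using just two ingredients: (i) $H(Y|U,X,W)=0$ makes $Y$ a deterministic function of $(U,X,W)$, so $H(U|X,W) \geq H(Y|X,W)$ and analogous inequalities hold under richer conditioning; and (ii) the independence $W \perp (X,Y)$ lets me drop $W$ from conditional entropies involving only $X$ and $Y$. The leakage $\epsilon$ enters through the identity $H(U) = H(U|X) + I(U;X) = H(U|X) + \epsilon$.

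For $L_1 = H(Y|X)$, I would simply use that conditioning reduces entropy: $H(U) \geq H(U|X,W) \geq H(Y|X,W) = H(Y|X)$, which does not even invoke the leakage constraint. For $L_2 = \min_{x} H(Y|X=x) + \epsilon$, I would write $H(U) = \epsilon + H(U|X) \geq \epsilon + H(U|X,W)$ and analyze the inner term realization-by-realization: since $H(Y|U,X,W)=0$ implies $H(Y|U,X=x,W=w)=0$ for every $(x,w)$ with positive probability, the function-of relation holds pointwise, giving $H(U|X=x,W=w) \geq H(Y|X=x,W=w) = H(Y|X=x)$. Averaging over $w$ (using $W \perp X$) and then over $x$ yields $H(U|X,W) \geq H(Y|X) \geq \min_x H(Y|X=x)$.

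For $L_3 = H(Y|X) - H(X|Y) + \epsilon$, I would first rewrite the target using $H(Y|X) - H(X|Y) = H(Y) - H(X)$ and then start from $H(U) \geq H(U|W) \geq I(U;Y|W) = H(Y) - H(Y|U,W)$, where the last equality uses $W \perp Y$. The remaining job is to show $H(Y|U,W) \leq H(X) - \epsilon$. The key chain is $H(Y|U,W) \leq H(X,Y|U,W) = H(X|U,W) + H(Y|X,U,W) = H(X|U,W)$, using ingredient (i) in the last step, followed by $H(X|U,W) \leq H(X|U) = H(X) - I(U;X) = H(X) - \epsilon$. Combining gives $H(U) \geq H(Y) - H(X) + \epsilon$, as required.

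No single step is a genuine obstacle; each bound is a short chain of standard conditional-entropy manipulations. The only mildly delicate point is in the argument for $L_2$, where I must invoke $H(Y|U,X,W)=0$ pointwise in $(x,w)$ rather than in its averaged form, so that the per-letter conditional entropy $H(Y|X=x)$ appears individually and the minimum over $x$ can be extracted cleanly.
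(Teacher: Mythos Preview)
Your proposal is correct. For $L_1$ and $L_2$ your argument is essentially the paper's: the paper also establishes $H(U|W,X=x)\geq H(Y|X=x)$ pointwise via the chain $H(U|W,X=x)=H(U,Y|W,X=x)\geq H(Y|W,X=x)=H(Y|X=x)$, and then either averages over $x$ (for $L_1$) or combines with $H(U)-\min_x H(U|X=x)\geq I(U;X)=\epsilon$ (for $L_2$). Your route through $H(U)=\epsilon+H(U|X)$ followed by averaging is a minor reshuffle of the same pieces; note, incidentally, that your $L_2$ chain actually delivers the stronger bound $H(U)\geq H(Y|X)+\epsilon$, which already dominates all three $L_i$.

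For $L_3$ the routes differ more visibly. The paper invokes a cited ``key equation'' identity $I(U;Y,W)=I(U;X)+H(Y,W|X)-H(Y,W|X,U)-I(X;U|Y,W)$ and then expands each term using the independence of $W$ from $(X,Y)$ and drops the nonnegative residuals $I(W;X,U)$ and $H(X|Y,U,W)$. Your derivation is more self-contained: you rewrite $L_3=H(Y)-H(X)+\epsilon$, bound $H(U)\geq I(U;Y|W)=H(Y)-H(Y|U,W)$, and control $H(Y|U,W)\leq H(X|U,W)\leq H(X|U)=H(X)-\epsilon$. Both arguments are short; yours avoids the external reference at the cost of the harmless extra rewriting $H(Y|X)-H(X|Y)=H(Y)-H(X)$.
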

\begin{proof}
	The proof is provided in Appendix~A.
\end{proof}
%Note that $H(Y|U,X,W)=0$ corresponds to decodability of $Y$ at the receiver side, i.e., using decoded message, private data and shared secret key, $Y$ can be decoded.
In the next two theorems we provide upper and lower bounds on $\mathbb{L}(P_{XY},M,\epsilon)$. The next theorem is a generalization of \cite[Theorem~8]{kostala} for correlated $\cal C$ and $X$. \iffalse As shown in Fig.~\ref{kesh}, to construct such code we use the two-part construction coding strategy where we first encode the private data $X$ using a one-time-pad coding \cite[Lemma~1]{kostala2}, then we encode the RV which is found by EFRL using any lossless variable length coding and use the bound obtained in \cite[Lemma~5]{king1}. Since the output of the one-time-pad coding is independent of $X$ and %by choosing the randomness used in one-time-pad coding independent of $X$ and 
the output of the EFRL, the leakage between $X$ and encoded output of the EFRL remains $\epsilon$. For more details see Appendix~A. As shown in Fig.~\ref{kesh1}, at the receiver side, the private data $X$ is first decoded, then use the output of the EFRL and the shared secret key to obtain $Y$.   \fi
\begin{theorem}\label{the1}
	Let $0\leq \epsilon \leq H(X)$ and the pair of RVs $(X,Y)$ be distributed according to $P_{XY}$, the shared secret key size be $|\mathcal{X}|$, i.e., $M=|\mathcal{X}|$. Then, we have
	\begin{align}
	\mathbb{L}(P_{XY},|\mathcal{X}|,\epsilon)&\leq\nonumber\\ & \!\!\!\sum_{x\in\mathcal{X}}\!\!H(Y|X=x)\!+\!\epsilon\!+\!h(\alpha)\!+\!1+\!\ceil{\log (|\mathcal{X}|)},\label{koon}
	\end{align}
	where $\alpha=\frac{\epsilon}{H(X)}$ and if $|\mathcal{Y}|$ is finite we have
	\begin{align}
	&\mathbb{L}(P_{XY},|\mathcal{X}|,\epsilon)\nonumber\\& \leq \ceil{\log\left(|\mathcal{X}|(|\mathcal{Y}|-1)+1\right)\left(|\mathcal{X}|+1\right)}+\ceil{\log (|\mathcal{X}|)},\label{koon2}
	\end{align}
	Finally, if $X$ is a deterministic function of $Y$ we have
	\begin{align}\label{goh}
	&\mathbb{L}(P_{XY},|\mathcal{X}|,\epsilon)\nonumber\\& \leq \ceil{\log(\left(|\mathcal{Y}|-|\mathcal{X}|+1\right)\left[|\mathcal{X}|+1\right])}+\ceil{\log (|\mathcal{X}|)}.
	\end{align}
\end{theorem}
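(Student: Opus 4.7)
The plan is to use a two-part construction: encode $X$ via a one-time pad that consumes the shared key $W$, and separately encode an auxiliary random variable $U$ produced by the EFRL of Lemma~\ref{EFRL}. Identifying $\mathcal{X}$ with $\{1,\ldots,|\mathcal{X}|\}$, set $T = (X + W) \bmod |\mathcal{X}|$; since $W$ is uniform on $\{1,\ldots,|\mathcal{X}|\}$ and independent of $(X,Y,U)$, the variable $T$ is uniform on $\mathcal{X}$ and jointly independent of $(X,U)$. Apply Lemma~\ref{EFRL} to $(X,Y)$ to obtain $U$ with $I(U;X) = \epsilon$ and $H(Y\mid U,X) = 0$. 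The encoder outputs $\mathcal{C}(Y,W) = (\sigma(T), \phi(U))$, where $\sigma(T)$ is the binary representation of $T$ using $\lceil\log|\mathcal{X}|\rceil$ bits and $\phi$ is a prefix-free lossless code for $U$ (variable length for \eqref{koon}; fixed length for \eqref{koon2} and \eqref{goh}). The decoder reads $\sigma(T)$ to recover $X = (T-W) \bmod |\mathcal{X}|$, then decodes $U$ from $\phi(U)$, and finally reconstructs $Y$ as the deterministic function of $(U,X)$ guaranteed by $H(Y\mid U,X)=0$.

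To verify privacy, I would use that $T$ is jointly independent of $(X,U)$, so by the chain rule $I(T,U;X) = I(T;X) + I(U;X\mid T) = 0 + I(U;X) = \epsilon$, where $I(U;X\mid T)=I(U;X)$ follows from $T \perp (U,X)$. Injectivity of $\sigma$ and $\phi$ then gives $I(\mathcal{C};X) = I(T,U;X) = \epsilon$, placing the code in $\mathcal{H}^{\epsilon}(\alpha,|\mathcal{X}|)$. Moreover, since $W$ is independent of $(Y,U)$, the conditional expected length given $W=w$ equals the unconditional expected length of $\phi(U)$ plus $\lceil\log|\mathcal{X}|\rceil$ for every $w$, so the $(\alpha,M)$-variable-length condition in \eqref{jojo} holds uniformly in $w$.

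For the three bounds, it suffices to control $\mathbb{E}[\mathbb{L}(\phi(U))]$. For \eqref{koon}, I would invoke \cite[Lemma~5]{king1} to obtain $H(U) \leq \sum_{x\in\mathcal{X}} H(Y\mid X=x) + \epsilon + h(\epsilon/H(X))$ for the EFRL construction, combine this with the Huffman bound $\mathbb{E}[\mathbb{L}(\phi(U))] \leq H(U) + 1$, and add $\lceil\log|\mathcal{X}|\rceil$ for the one-time-pad part. For \eqref{koon2} and \eqref{goh}, take $\phi$ to be a fixed-length code of length $\lceil\log|\mathcal{U}|\rceil$ and substitute the general cardinality bound $|\mathcal{U}|\leq (|\mathcal{X}|(|\mathcal{Y}|-1)+1)(|\mathcal{X}|+1)$, respectively the sharper bound $|\mathcal{U}|\leq (|\mathcal{Y}|-|\mathcal{X}|+1)(|\mathcal{X}|+1)$ under the assumption that $X$ is a deterministic function of $Y$, both supplied by Lemma~\ref{EFRL}.

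The main delicate step is the privacy verification: one must establish joint (not merely pairwise) independence of $T$ from $(X,U)$, so that appending the one-time-pad codeword to the EFRL codeword preserves the leakage at exactly $\epsilon$ rather than inflating it. This is where the uniformity of $W$ and its independence from the EFRL randomness are essential, and it is also what distinguishes \eqref{main1} from the bounded-leakage problem \eqref{main2}.
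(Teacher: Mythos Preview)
Your proposal is correct and follows essentially the same route as the paper: a two-part code consisting of a one-time pad $\tilde{X}=(X+W)\bmod|\mathcal{X}|$ together with a prefix-free encoding of the EFRL variable $U$, with the privacy computation $I(X;\tilde{X},U)=I(X;\tilde{X})+I(X;U\mid\tilde{X})=0+I(X;U)=\epsilon$ relying on the joint independence of $\tilde{X}$ from $(X,U)$, and the three length bounds coming respectively from \cite[Lemma~5]{king1} and the two cardinality bounds in Lemma~\ref{EFRL}. If anything, your write-up is more explicit than the paper's, since you spell out the uniform-in-$w$ verification of \eqref{jojo} and flag the need for \emph{joint} rather than pairwise independence of the pad output from $(X,U)$.
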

\begin{proof}
	The proof is based on \cite[Lemma~5]{king1} and the two-part construction coding and is provided in Appendix~A.
\end{proof}
%\begin{corollary}
%	By letting $\epsilon=0$, we get $\mathbb{L}(P_{XY},|\mathcal{X}|,\epsilon)\leq \sum_{x\in\mathcal{X}}H(Y|X=x)+\ceil{\log (|\mathcal{X}|)}$ which is the same result as \cite[Theorem~5]{kostala}.
%\end{corollary}
In the next theorem we provide lower bounds on $\mathbb{L}(P_{XY},M,\epsilon)$.
\begin{theorem}\label{the2}
	Let $0\leq \epsilon \leq H(X)$ and the pair of RVs $(X,Y)$ be distributed according to $P_{XY}$ supported on alphabets $\mathcal{X}$ and $\mathcal{Y}$, where $|\mathcal{X}|$ is finite and $|\mathcal{Y}|$ is finite or countably infinite. For any shared secret key size $M\geq 1$ we have
	\begin{align}\label{23}
	&\mathbb{L}(P_{XY},M,\epsilon)\geq\nonumber\\ &\max\{L_1(\epsilon,P_{XY}),L_2(\epsilon,P_{XY}),L_3(\epsilon,P_{XY})\},
	\end{align}
	where $L_1(\epsilon,P_{XY})$, $L_2(\epsilon,P_{XY})$ and $L_3(\epsilon,P_{XY})$ are defined in \eqref{khar}, \eqref{khar1}, and \eqref{khar2}. 
	%If $W$ is equi-probable on $\{1,..,M\}$, then $L^3(\epsilon,P_{XY})=H(Y|X)-H(X|Y)+\epsilon-\log(M)$.  
	Furthermore, if $X$ is deterministic function of $Y$, then we have
	\begin{align}\label{ma}
	\mathbb{L}(P_{XY},M,\epsilon)\geq \log(\frac{1}{\max_x P_X(x)}).
	\end{align} 
\end{theorem}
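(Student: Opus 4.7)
The plan is to cast the converse as an application of Lemma~\ref{converse} to the random variable $U:=\mathcal{C}(Y,W)$, and then to convert the resulting entropy lower bound on $U$ into a length lower bound on the code via Kraft's inequality.

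First, I will pick an arbitrary $(\mathcal{C},\mathcal{D})\in\mathcal{H}^{\epsilon}(\alpha,M)$, set $U=\mathcal{C}(Y,W)$, and verify the hypotheses of Lemma~\ref{converse}. The $\epsilon$-private condition \eqref{lash1} gives $I(U;X)=\epsilon$ directly. Losslessness forces $Y$ to be a deterministic function of $(U,W)$, so $H(Y\mid U,W)=0$ and hence $H(Y\mid U,X,W)\leq H(Y\mid U,W)=0$. Combined with the standing assumption $W\perp(X,Y)$, the hypotheses of Lemma~\ref{converse} are met, yielding
\[
H(U)\geq \max\{L_1(\epsilon,P_{XY}),L_2(\epsilon,P_{XY}),L_3(\epsilon,P_{XY})\}.
\]

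Second, I will relate $H(U)$ to $\alpha$. Because $\mathcal{C}$ is prefix-free, the support of $U$ is a prefix-free subset of $\{0,1\}^*$, so Kraft's inequality together with the usual Gibbs argument yields $\mathbb{E}[\mathbb{L}(U)]\geq H(U)$. The $(\alpha,M)$-variable-length constraint \eqref{jojo} says $\mathbb{E}[\mathbb{L}(\mathcal{C}(Y,w))]\leq\alpha$ for every $w$; averaging over the independent $W$ gives $\mathbb{E}[\mathbb{L}(U)]\leq\alpha$. Chaining these yields $\alpha\geq\max\{L_1,L_2,L_3\}$, and taking the infimum over admissible codes delivers \eqref{23}.

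For the sharpened bound \eqref{ma} when $X$ is a deterministic function of $Y$, I will work at the per-key level rather than invoke Lemma~\ref{converse}. For each $w$, losslessness forces $\mathcal{C}(\cdot,w)$ to be injective on the support of $Y$, so it is a prefix-free lossless code for $Y$ and Shannon's classical bound gives $\mathbb{E}[\mathbb{L}(\mathcal{C}(Y,w))]\geq H(Y)$, hence $\alpha\geq H(Y)$. Since $X=f(Y)$ we have $H(X)\leq H(Y)$, and the pointwise inequality $-\log P_X(X)\geq -\log\max_x P_X(x)$ gives $H(X)\geq \log(1/\max_x P_X(x))$ upon taking expectation; the chain $\alpha\geq H(Y)\geq H(X)\geq \log(1/\max_x P_X(x))$ completes \eqref{ma}.

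The core of the argument is Lemma~\ref{converse}, so there is no real obstacle beyond carefully verifying its hypotheses for the concrete $U=\mathcal{C}(Y,W)$. The one subtlety worth flagging is the step $H(Y\mid U,X,W)=0$: deriving it from the stronger $H(Y\mid U,W)=0$ via monotonicity of conditional entropy under additional conditioning is cleaner than arguing from the abstract losslessness relation directly.
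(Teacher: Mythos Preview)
Your argument for \eqref{23} is correct and coincides with the paper's: both set $U=\mathcal{C}(Y,W)$, check the hypotheses of Lemma~\ref{converse} (the paper invokes ``recoverability'' where you monotonize $H(Y\mid U,W)=0$ to $H(Y\mid U,X,W)=0$), and pass from the entropy bound on $U$ to a length bound via the prefix property; you merely make the Kraft/Gibbs step explicit. For \eqref{ma} your route genuinely differs from the paper's. You observe that for each fixed key $w$ the map $\mathcal{C}(\cdot,w)$ is itself an injective prefix-free code for $Y$, so Shannon's converse gives $\alpha\geq H(Y)$, and then the chain $H(Y)\geq H(X)\geq\log(1/\max_x P_X(x))$ (using $X=f(Y)$ and the min-entropy inequality) finishes. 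The paper instead controls the pmf of $\tilde U=\mathcal{C}(Y,W)$ pointwise: from losslessness together with $X=f(Y)$ it shows that $P_{\tilde U\mid W,X}(\tilde u\mid w,x)>0$ for at most one $x$, whence $P_{\tilde U}(\tilde u)\leq\max_x P_X(x)$ and $H(\tilde U)\geq\log(1/\max_x P_X(x))$. Your approach is shorter and in fact yields the stronger intermediate conclusion $\alpha\geq H(Y)$; the paper's argument buys a pointwise bound on the codeword distribution (a min-entropy statement about $\tilde U$ itself), a finer structural fact even though it leads to the same stated inequality.
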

\begin{proof}
	The proof is based on Lemma~\ref{converse} and is provided in Appendix~A.
\end{proof}
Next, we provide an example where the upper and lower bounds obtained in Theorem~\ref{the1} and Theorem~\ref{the2} are studied. 
\begin{example}\label{anus}
	Let $Y=(Y_1,..,Y_N)$ be an i.i.d Bernoulli($0.5$) sequence and $X_N=\frac{1}{N}\sum_{i=1}^{N} Y_i$. Clearly, $X_N$ is a function of $Y$, $|\mathcal{X}|=N+1$ and $|\mathcal{Y}|=2^N$. Using \cite[(72)]{kostala} we have
	\begin{align}
	h_0(P_{XY})\stackrel{(a)}{=}H(Y|X)=\sum_{i=1}^{N} (\frac{1}{2})^N \binom{N}{i} ,
	\end{align}
	where (a) follows from \cite[(70)]{kostala}. 
	Let the shared key size be $N+1$, by using Theorem~\ref{the1} and Theorem~\ref{the2} we have
	\begin{align}
	\epsilon+\sum_{i=1}^{N} (\frac{1}{2})^N \binom{N}{i}&\leq\mathbb{L}(P_{XY},N+1,\epsilon)\nonumber\\&\leq \log((N\!+\!2)(2^N\!-\!N))\!+\!\log(N\!+\!1),\label{kir}
	\end{align}
	where \eqref{goh} is used for the upper bound and $L_3(\epsilon,P_{XY})$ is used for the lower bound. Using \cite[(73)]{kostala} we have
	\begin{align}\label{jee}
	\lim_{N\rightarrow \infty} \frac{h_0(P_{XY})}{N}= \lim_{N\rightarrow \infty} \frac{H(Y|X)}{N}=h(0.5)=1,
	\end{align}
	where $h(\cdot)$ is the binary entropy function. 
	Now by using \eqref{kir} and \eqref{jee} we obtain
	\begin{align}
	\lim_{N\rightarrow \infty} \frac{\mathbb{L}(P_{XY},N+1,\epsilon)}{N}=1
	\end{align}
\end{example} 
 \textbf{\emph{Bounds for bounded leakage constraint: }}
 %\subsection*{New bounds considering $X=(X_1,X_2)$: }
 Here, we assume that $X=(X_1,X_2)$ where $\mathcal{X}=\mathcal{X}_1\times\mathcal{X}_2$, $1<|\mathcal{X}_1|\leq |\mathcal{X}_2|$, and $|\mathcal{X}_1||\mathcal{X}_2|=|\mathcal{X}|$. In the next theorem we provide upper bounds for $\mathbb{L}^b(P_{XY},M,\epsilon)$. We show that when the leakage is greater than a threshold we are able to communicate the message over the channel using a shared key size less than $|\mathcal{X}|$ and the receiver can decode the message without any loss. 
 \begin{theorem}\label{the7}
 	Let $\epsilon\geq H(X_1)$ and RVs $(X_1,X_2,Y)$ be distributed according to $P_{X_1X_2Y}$ and let the shared secret key size be $|\mathcal{X}_2|$, i.e., $M=|\mathcal{X}_2|$. Then, we have
 	\begin{align}
 	&\mathbb{L}^b(P_{XY},|\mathcal{X}_2|,\epsilon)\leq\nonumber\\ & \!\!\!\sum_{x\in\mathcal{X}}\!\!H(Y|X=x)\!+H(X_1)\!+\!2+\!\ceil{\log (|\mathcal{X}_2|)},\label{koskoon}
 	\end{align}
 	if $|\mathcal{Y}|$ is finite, then 
 	\begin{align}
 	&\mathbb{L}^b(P_{XY},|\mathcal{X}_2|,\epsilon)\nonumber\\& \leq \ceil{\log\left(|\mathcal{X}|(|\mathcal{Y}|-1)+1\right)}+\ceil{\log (|\mathcal{X}_2|)}+2+H(X_1),\label{koskoon3}
 	\end{align}
 	and if $X=(X_1,X_2)$ is deterministic function of $Y$, we get
 	\begin{align}
 	&\mathbb{L}^b(P_{XY},|\mathcal{X}_2|,\epsilon)\nonumber\\& \leq \ceil{\log\left(|\mathcal{Y}|-|\mathcal{X}|+1\right)}+\ceil{\log (|\mathcal{X}_2|)}+2+H(X_1),\label{koskoon4}
 	\end{align}
 	Furthermore, let $\epsilon\geq H(X_2)$ and the shared secret key size be $|\mathcal{X}_1|$. We have
 	\begin{align}
 	&\mathbb{L}^b(P_{XY},|\mathcal{X}_1|,\epsilon)\leq\nonumber\\ & \!\!\!\sum_{x\in\mathcal{X}}\!\!H(Y|X=x)\!+H(X_2)\!+\!2+\!\ceil{\log (|\mathcal{X}_1|)},\label{koskoon2}
 	\end{align}
 \end{theorem}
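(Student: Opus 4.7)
The plan is to adapt the two-part construction used for Theorem~\ref{the1} so as to exploit the slack in the bounded-leakage constraint. Because the constraint now permits $\epsilon\ge H(X_1)$, I would reveal $X_1$ in the clear and only protect $X_2$ cryptographically; this is precisely what lets the key size shrink from $|\mathcal{X}|$ to $|\mathcal{X}_2|$. Concretely, I build a three-part codeword $\mathcal{C}(Y,W)=(A,B,D)$ where $A=f_1(X_1)$ is a lossless prefix-free (Huffman) encoding of $X_1$ of expected length at most $H(X_1)+1$, $B=X_2\oplus W$ is the one-time-pad of $X_2$ using the shared key $W\sim\mathrm{Unif}\{1,\ldots,|\mathcal{X}_2|\}$ of deterministic length $\ceil{\log|\mathcal{X}_2|}$, and $D=f_2(U)$ is a lossless encoding of the auxiliary RV $U$ produced by the Functional Representation Lemma (FRL) applied to $(X,Y)$, which guarantees $I(U;X)=0$, $H(Y\mid U,X)=0$, and $|\mathcal{U}|\le|\mathcal{X}|(|\mathcal{Y}|-1)+1$. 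At the decoder, $X_1$ is recovered from $A$, $X_2$ from $B\ominus W$, and $U$ from $D$, after which $Y$ is reconstructed as a deterministic function of $(U,X)$.

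For the three length bounds I would treat the three pieces separately, keeping the contributions $H(X_1)+1$ and $\ceil{\log|\mathcal{X}_2|}$ in all cases and plugging in a different bound on $\mathbb{E}[\mathbb{L}(D)]$ each time. For \eqref{koskoon} I would use the inequality $H(U)\le\sum_{x\in\mathcal{X}}H(Y\mid X=x)$ that underlies Theorem~\ref{the1} evaluated at $\epsilon=0$, combined with a Huffman code on $U$, contributing $H(U)+1$. For \eqref{koskoon3} I would use the FRL cardinality bound $|\mathcal{U}|\le|\mathcal{X}|(|\mathcal{Y}|-1)+1$ together with a variable-length code on $U$, contributing $\ceil{\log|\mathcal{U}|}+1$. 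For \eqref{koskoon4} I would use the sharper FRL cardinality bound $|\mathcal{Y}|-|\mathcal{X}|+1$ that applies when $X$ is a deterministic function of $Y$. In each of \eqref{koskoon3} and \eqref{koskoon4} the additive $2$ absorbs the two $+1$ overheads from Huffman-coding $X_1$ and from variable-length coding $U$.

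The main technical step is the leakage analysis $I(\mathcal{C};X)\le H(X_1)\le\epsilon$. I would decompose $I(A,B,D;X)=I(A;X)+I(B,D;X\mid A)$. The first term equals $H(X_1)$ because $A$ losslessly encodes $X_1$. For the second term I would show that, conditional on $X_1$, the pair $(B,D)$ is independent of $X_2$: since $W\perp(X,Y,U)$, the one-time-pad property makes $B$ uniform on $\mathcal{X}_2$ irrespective of $(X_1,X_2,U)$, while $I(U;X)=0$ makes $D$ independent of $X$. A short direct computation of the joint distribution then gives $P(B,D\mid X_1,X_2)=P_D(d)/|\mathcal{X}_2|=P(B,D\mid X_1)$, so $I(B,D;X_2\mid X_1)=0$ and the second term vanishes. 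I expect this joint-independence check to be the main subtlety, because $U$ is correlated with $Y$ and hence with $X$ through $Y$, so one must verify that no residual coupling survives beyond the $H(X_1)$ bits already spent on $A$.

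Finally, the symmetric bound \eqref{koskoon2} follows by swapping the roles of $X_1$ and $X_2$: when $\epsilon\ge H(X_2)$ I would reveal $X_2$ directly, one-time-pad $X_1$ with a key of size $|\mathcal{X}_1|$, and encode the same FRL-auxiliary $U$; the length bookkeeping and the leakage computation carry over verbatim.
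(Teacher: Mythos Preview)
Your proposal is correct and follows essentially the same three-part construction as the paper: reveal $X_1$ with a Huffman/lossless code, protect $X_2$ with a one-time pad of key size $|\mathcal{X}_2|$, and encode the FRL auxiliary $U$ (with the entropy bound $\sum_x H(Y\mid X=x)$ for \eqref{koskoon} and the cardinality bounds for \eqref{koskoon3}--\eqref{koskoon4}); the symmetric bound \eqref{koskoon2} is obtained by swapping roles. The only cosmetic difference is in the leakage computation: the paper expands $I(X_1,\tilde X_2,U;X_1,X_2)$ directly via entropies and invokes the independence of $\tilde X_2$ from $(X_1,X_2,U)$, whereas you use the chain rule $I(A,B,D;X)=I(A;X)+I(B,D;X\mid A)$ and kill the second term---both arguments are equivalent, and your worry about ``residual coupling through $Y$'' is unfounded since FRL already guarantees $U\perp X$ outright.
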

 \begin{proof}
 	The proof is similar as proof of Theorem~\ref{the1}. For proving \eqref{koskoon}, $X_2$ is encoded by one-time-pad coding \cite[Lemma~1]{kostala2}, which uses $\ceil{\log(|\mathcal{X}_2|)}$ bits. Let the output of one-time-pad coding be $\tilde{X}_2$. Next we encode $X_1$ using any traditional code which uses at most $H(X_1)+1$ bits. Next, we produce $U$ based on FRL \cite[Lemma~1]{kostala} with assignments $X\leftarrow (X_1,X_2)$ and $Y\leftarrow Y$, and encode it using any traditional code which uses at most $H(U)+1$ bits. We send encoded $X_1$, $\tilde{X}_2$ and $U$ over the channel. Since $U$ satisfies $I(U;X_1,X_2)=0$ and $H(Y|U,X_1,X_2)=0$ by using the upper bound on $H(U)$ found in \cite[Lemma~2]{kostala} we obtain 
 	$
 	H(U)\leq \sum_x H(Y|X=x).
 	$
 	Finally, we obtain
 	$
 	\mathbb{L}^b(P_{XY},|\mathcal{X}_2|,\epsilon)\leq \sum_{x\in\mathcal{X}}\!\!H(Y|X=x)\!+H(X_1)\!+\!2+\!\ceil{\log (|\mathcal{X}_2|)}.
 	$ 
 	Note that at receiver side, using the shared key and $\tilde{X}_2$ we can decode $X_2$ and by using $X_1$, $X_2$ and $U$ we can decode $Y$ without loss. For the leakage constraint we note that the one-time-pad coding can be used such that $\tilde{X}_2$ be independent of $(X_1,X_2,U)$. Thus, we have
 	$
 	I(X_1,\tilde{X}_2,U;X_1,X_2)=H(X_1)+H(\tilde{X}_2)+H(U)-H(X_1|X_1,X_2)-H(\tilde{X}_2|X_1,X_2)-H(U|X_1,X_2,\tilde{X}_2)\stackrel{(a)}{=}H(X_1)\leq \epsilon,
 	$   
 	where (a) follows by the independency between $\tilde{X}_2$ and $(U,X_1,X_2)$, and independency between $U$ and $(X_1,X_2)$. 
 	Proof of \eqref{koskoon3} and \eqref{koskoon4} are based on \cite[(31)]{kostala} and \cite[(41)]{kostala}.
 	Proof of \eqref{koskoon2} is similar to \eqref{koskoon}. We use one-time-pad coding for encoding $X_1$ instead of $X_2$. In this case we send compressed $X_2$, $\tilde{X}_1$ and $U$ over the channel, where $\tilde{X}_1$ is the output of one-time-pad coding. We then have
 	$
 	\mathbb{L}^b(P_{XY},|\mathcal{X}_1|,\epsilon)\leq \sum_{x\in\mathcal{X}}\!H(Y|X=x)\!+H(X_2)\!+\!2+\!\ceil{\log (|\mathcal{X}_1|)}.
 	$
 \end{proof}
 \begin{remark}
 	As argued in \cite[Theorem~9]{kostala}, when the leakage is zero and $X$ is a deterministic function of $Y$, if the shared key size is less than the size of $X$, i.e., $M\leq |\mathcal{X}|$, lossless variable length codes do not exist. However, as it is shown in Theorem~\ref{the7}, when the leakage is more than a threshold, for the key size less than $|\mathcal{X}|$ such codes exist.
 \end{remark}
 \begin{remark}\label{tt}
 	The upper bounds \eqref{koskoon} and \eqref{koskoon2} can be less than the upper bound found in \cite[Theorem~9]{kostala} which is obtained under perfect privacy assumption. For instance, if $H(X_1)+2\leq \log(|\mathcal{X}_1|)$, then \eqref{koskoon} is less than \cite[(95)]{kostala}. Later follows since we have
 	$
 	H(X_1)+1+\ceil{\log (|\mathcal{X}_2|)}\leq \log(|\mathcal{X}_1|)+\ceil{\log (|\mathcal{X}|)-\log (|\mathcal{X}_1|)}-1 
 	\leq\ceil{\log(|\mathcal{X}_1|)}+\ceil{\log (|\mathcal{X}|)-\log (|\mathcal{X}_1|)}-1 \leq \ceil{\log (|\mathcal{X}|)},
 	$  
 	where the last inequality follows by $\ceil{a}+\ceil{b}\leq \ceil{a+b}+1$.
 \end{remark}
 In the next theorem we find lower bounds for $\mathbb{L}^b(P_{XY},M,\epsilon)$.
 \begin{theorem}\label{the8}
 	Let $0\leq \epsilon \leq H(X)$ and RVs $(X_1,X_2,Y)$ be distributed according to $P_{X_1X_2Y}$. For any shared secret key size $M\geq 1$ we have
 	\begin{align}\label{23456}
 	\mathbb{L}^b(P_{XY},M,\epsilon)\geq H(Y|X)-H(X|Y).
 	\end{align}
 	Furthermore, if $X$ is deterministic function of $Y$, then we have
 	\begin{align}\label{ma26}
 	\mathbb{L}^b(P_{XY},M,\epsilon)\geq \log(\frac{1}{\max_x P_X(x)}).
 	\end{align}
 \end{theorem}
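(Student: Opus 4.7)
The plan is to mirror the $L_3$ converse from Theorem~\ref{the2}, replacing the equality $I(\mathcal{C}(Y,W);X)=\epsilon$ by the bounded leakage constraint $I(\mathcal{C}(Y,W);X)\le\epsilon$ of the set $\mathcal{H}^{\epsilon}_b(\alpha,M)$. For any $(\mathcal{C},\mathcal{D})\in\mathcal{H}^{\epsilon}_b(\alpha,M)$ set $\tilde{U}=\mathcal{C}(Y,W)$ and denote the actual leakage $\epsilon':=I(\tilde{U};X)$, which satisfies $0\le\epsilon'\le\epsilon\le H(X)$. The lossless property $\mathbb{P}(\mathcal{D}(\tilde{U},W)=Y)=1$ gives $H(Y\mid\tilde{U},W)=0$, and therefore $H(Y\mid\tilde{U},X,W)=0$, so the hypotheses of Lemma~\ref{converse} hold with $\epsilon=\epsilon'$.

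Invoking Lemma~\ref{converse} with this choice of $\epsilon'$ yields
\begin{align*}
H(\tilde{U})\ge L_3(\epsilon',P_{XY})=H(Y|X)-H(X|Y)+\epsilon'\ge H(Y|X)-H(X|Y),
\end{align*}
where the final step uses only $\epsilon'\ge 0$. Combining this with the prefix-code lower bound $\alpha\ge H(\tilde{U})$ that was already used in the proof of Theorem~\ref{the2} and then taking the infimum over all admissible codes proves \eqref{23456}. For the second statement \eqref{ma26}, the argument is identical to the proof of \eqref{ma} in Theorem~\ref{the2}: if $X$ is a deterministic function of $Y$, then by losslessness $X$ is a deterministic function of $(\tilde{U},W)$, and using $W\perp X$ one obtains $H(\tilde{U})\ge H(X)\ge \log(1/\max_x P_X(x))$.

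The only mildly delicate point is that Lemma~\ref{converse} is stated with an equality $I(U;X)=\epsilon$, while here the leakage is only bounded. This is harmless because the lemma's proof goes through for whichever value $\epsilon'\in[0,H(X)]$ happens to equal $I(\tilde{U};X)$ for the code at hand, so the substitution is legitimate and gives back the $L_3$ inequality without the additive $+\epsilon$ term, which is exactly the content of \eqref{23456}. No further obstacle is anticipated, and the entire argument closely tracks the converse for Theorem~\ref{the2}.
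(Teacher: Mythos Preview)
Your proposal is correct and follows essentially the same route as the paper: for \eqref{23456} the paper re-expands the $L_3$ computation from Lemma~\ref{converse} directly and discards the $I(U;X)$ term using only $I(U;X)\ge 0$, which is equivalent to your device of setting $\epsilon'=I(\tilde U;X)$ and applying the lemma at that value. For \eqref{ma26} the paper, like you, simply defers to the proof of \eqref{ma}; note that the paper's actual argument there is the pointwise estimate $P_{\tilde U}(\tilde u)\le\max_x P_X(x)$ rather than your chain $H(\tilde U)\ge H(X)\ge\log(1/\max_x P_X(x))$, but both are valid and yield the same bound.
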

 \begin{proof}
 	Let $U=\mathcal{C}(Y,W)$ be the output of the encoder. The code is bounded $\epsilon$-private, thus $I(U;X_1,X_2)\leq \epsilon$. Due to recoverability, i.e., given the code, shared secret key, and the private data, we can recover the useful data, it also satisfies $H(Y|W,U,X_1,X_2)=0$. Using \eqref{koskesh} we have
 	\begin{align*}
 	H(U)&\geq I(U;Y,W)\\&=I(U;X)+H(Y|X)+H(W|X,Y)\\&-H(W|X,U)-H(Y|X,U,W)-H(X|Y,W)\\&+H(X|W,Y,U)\\&\geq H(Y|X)+H(W)-H(W|X,U)\\&-H(X|Y)+H(X|Y,U,W)\\&\geq H(Y|X)-H(X|Y),
 	\end{align*}
 	where we used $I(U;X)\geq 0$, $H(X|Y,U,W)\geq 0$ and independency of $W$ and $(X,Y)$. Proof of \eqref{ma26} is similar to \eqref{ma}. 
 \end{proof}
 \textbf{\emph{A general approach for $X$:}}
 Next, by using the same construction as in Theorem~\ref{the7}, we find upper bounds on $\mathbb{L}^b(P_{XY},M,\epsilon)$ for any joint distribution $P_{XY}$. %where $|\mathcal{X}|$ is not a prime number.
 Note that if $|\mathcal{X}|$ is not a prime number then we can show $X$ by $(X_1,X_2)$ where $\mathcal{X}=\mathcal{X}_1\times\mathcal{X}_2$, $1<|\mathcal{X}_1|\leq |\mathcal{X}_2|$, $|\mathcal{X}_1||\mathcal{X}_2|=|\mathcal{X}|$ and $P_X(x)=P_{X_1,X_2}(x_1,x_2)$. Furthermore, if $|\mathcal{X}|$ is a prime number then we can show $X$ by $(X_1,X_2)$ where $|\mathcal{X}_1||\mathcal{X}_2|=|\mathcal{X}|+1$ and $P_{X_1,X_2}(|\mathcal{X}_1|,|\mathcal{X}_2|)=0$. Let $\mathcal{S}_X$ be all possible representations of $X$ where $X=(X_1,X_2)$.  %We define $\mathcal{S}_X=\{(X_1,X_2): P_X(x)=P_{X_1,X_2}(x_1,x_2), \mathcal{X}=\mathcal{X}_1\times\mathcal{X}_2, |\mathcal{X}_1||\mathcal{X}_2|=|\mathcal{X}|,1<|\mathcal{X}_1|\leq |\mathcal{X}_2|\}$ and 
 For a fixed $\epsilon$ we define $\mathcal{S}_X^{1\epsilon}=\{(X_1,X_2): (X_1,X_2)\in \mathcal{S}_X, H(X_1)\leq \epsilon\}$ and $\mathcal{S}_X^{2\epsilon}=\{(X_1,X_2): (X_1,X_2)\in \mathcal{S}_X, H(X_2)\leq \epsilon\}$. 
 \begin{theorem}\label{the9}
 	For any $\epsilon\geq 0$, pair of RVs $(X,Y)$ distributed according to $P_{XY}$ and shared key size $M\geq \alpha$, if $\mathcal{S}_X^{1\epsilon}\neq \O $ we have
 	\begin{align}
 	&\mathbb{L}^b(P_{XY},M,\epsilon)\leq\nonumber\\ & \!\!\!\sum_{x\in\mathcal{X}}\!\!H(Y|X=x)\!+\!2\!+\!\min_{(X_1,X_2)\in\mathcal{S}_X^{1\epsilon}}\left\{H(X_1)\!+\!\ceil{\log (|\mathcal{X}_2|)}\right\},\label{koskoon6}
 	\end{align}
 	where $\alpha = |\argmin_{X_2:(X_1,X_2)\in \mathcal{S}_X^{1\epsilon}} H(X_1)+\ceil{\log (|\mathcal{X}_2|)}|<|\mathcal{X}|$ and $|S|$ denotes the size of the RV $S$. If $\mathcal{S}_X^{2\epsilon}\neq \O $, for a shared key size $M\geq \beta$ we have
 	\begin{align}
 	&\mathbb{L}^b(P_{XY},M,\epsilon)\leq\nonumber\\ & \!\!\!\sum_{x\in\mathcal{X}}\!\!H(Y|X=x)\!+\!2\!+\!\min_{(X_1,X_2)\in\mathcal{S}_X^{1\epsilon}}\left\{H(X_2)\!+\!\ceil{\log (|\mathcal{X}_1|)}\right\},\label{koskoon7}
 	\end{align}
 	where $\alpha = |\argmin_{X_1:(X_1,X_2)\in \mathcal{S}_X^{2\epsilon}} H(X_1)+\ceil{\log (|\mathcal{X}_2|)}|<|\mathcal{X}|$.
 \end{theorem}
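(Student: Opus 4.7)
The plan is to reduce Theorem~\ref{the9} to Theorem~\ref{the7} by optimizing over all possible decompositions of $X$ into pairs $(X_1,X_2)$. First, I would fix any representation $(X_1,X_2)\in\mathcal{S}_X^{1\epsilon}$, which by definition satisfies $H(X_1)\le\epsilon$. When $|\mathcal{X}|$ is composite, such a representation exists whenever $|\mathcal{X}|=|\mathcal{X}_1||\mathcal{X}_2|$ with the induced joint pmf matching $P_X$; when $|\mathcal{X}|$ is prime we use the zero-padding convention described just before the theorem statement so that $|\mathcal{X}|+1$ factors, and the joint pmf assigns zero to the extra symbol. In either case, the factorization preserves all relevant marginals and conditional entropies, so $\sum_x H(Y|X=x)$ is unaffected.

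Next, for each such decomposition, I would invoke the construction in the proof of Theorem~\ref{the7}: encode $X_2$ with a one-time-pad using a key of size $|\mathcal{X}_2|$ (consuming $\lceil\log|\mathcal{X}_2|\rceil$ bits), encode $X_1$ with any prefix-free variable-length code (at most $H(X_1)+1$ bits), and produce $U$ via the Functional Representation Lemma so that $I(U;X_1,X_2)=0$ and $H(Y|U,X_1,X_2)=0$, encoded in at most $H(U)+1\le\sum_x H(Y|X=x)+1$ bits. The leakage analysis in the proof of Theorem~\ref{the7} carries over verbatim: the total leakage $I(X_1,\tilde{X}_2,U;X_1,X_2)=H(X_1)\le\epsilon$, so the code is bounded $\epsilon$-private. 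Summing the three contributions yields the bound
\begin{equation*}
\sum_{x\in\mathcal{X}} H(Y|X=x)+H(X_1)+2+\lceil\log|\mathcal{X}_2|\rceil,
\end{equation*}
valid for each $(X_1,X_2)\in\mathcal{S}_X^{1\epsilon}$. Since the encoder is free to choose the representation, minimizing the right-hand side over $\mathcal{S}_X^{1\epsilon}$ proves \eqref{koskoon6}. The symmetric construction with the roles of $X_1$ and $X_2$ swapped (one-time-pad on $X_1$) gives \eqref{koskoon7}.

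The only subtlety is the key-size requirement. The one-time-pad step needs a key of size equal to $|\mathcal{X}_2|$ for the chosen minimizer, which is exactly the quantity $\alpha$ defined in the statement; hence the precondition $M\ge\alpha$ suffices (and $\alpha<|\mathcal{X}|$ because any nontrivial decomposition has $|\mathcal{X}_2|<|\mathcal{X}|$). The main obstacle is largely notational rather than mathematical: one must be careful that the decomposition used for the one-time-pad matches the one achieving the minimum, and that the prime-alphabet padding does not change the conditional entropy terms. Once these bookkeeping points are handled, the argument is a direct application of Theorem~\ref{the7} optimized over the representation set.
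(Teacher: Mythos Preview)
Your proposal is correct and follows essentially the same route as the paper: fix a representation $(X_1,X_2)\in\mathcal{S}_X^{1\epsilon}$, apply Theorem~\ref{the7} (whose construction you unpack explicitly), then minimize over all admissible representations, noting that $\sum_x H(Y|X=x)$ is invariant under relabeling and that the required key size is $|\mathcal{X}_2|$ for the minimizer. The paper's proof is terser but identical in substance, and it likewise obtains \eqref{koskoon7} by swapping the roles of $X_1$ and $X_2$.
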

 \begin{proof}
 	For proving \eqref{koskoon6} let $(X_1,X_2)\in\mathcal{S}_X^{1\epsilon}$. Then, by using Theorem~\ref{the7} we have
 	\begin{align}
 	&\mathbb{L}^b(P_{XY},|\mathcal{X}_2|,\epsilon)\leq\nonumber\\ & \!\!\!\sum_{x\in\mathcal{X}}\!\!H(Y|X=x)\!+H(X_1)\!+\!2+\!\ceil{\log (|\mathcal{X}_2|)}.\label{an}
 	\end{align}
 	Since \eqref{an} holds for any $(X_1,X_2)\in\mathcal{S}_X^{1\epsilon}$ we can take minimum over $(X_1,X_2)$ which results in \eqref{koskoon6}. 
 	Furthermore, the key size is chosen as $|\argmin_{X_2:(X_1,X_2)\in \mathcal{S}_X^{1\epsilon}} H(X_1)+\ceil{\log (|\mathcal{X}_2|)}|<|\mathcal{X}|$ and note that the first term $\sum_{x\in\mathcal{X}}H(Y|X=x)$ remains the same since it is a function of $P_{Y|X_1X_2}$, i.e., distributions including the pair $(X_1,X_2)$ do not change since $P_{X_1,X_2}(x_1,x_2)=P_X(x)$.
 	Using \eqref{koskoon2} and following the same approach \eqref{koskoon7} is obtained.
 \end{proof}
 \begin{remark}
 	The upper bounds \eqref{koskoon6} and \eqref{koskoon7} generalize \eqref{koskoon} and \eqref{koskoon2}. Upper bounds \eqref{koskoon} and \eqref{koskoon2} are obtained for fixed $(X_1,X_2)$. Thus, by taking minimum over $(X_1,X_2)\in\mathcal{S}_X^{1\epsilon}$ and $(X_1,X_2)\in\mathcal{S}_X^{2\epsilon}$ we get tighter upper bounds.
 \end{remark}
 \begin{remark}
 	Same lower bounds as obtained in Theorem~\ref{the8} can be used here since the separation technique does not improve the bounds in \eqref{23456} and \eqref{ma26}.  
 \end{remark}
 \begin{remark}
 	Using Example~\ref{anus}, we can check that the bounds obtained in Theorem~\ref{the7}, Theorem~\ref{the8}, Theorem~\ref{the9}, and Theorem~\ref{the10} can be asymptotically tight. 
 \end{remark}
 Next, we provide an example to clarify the construction in Theorem~\ref{the9} and compare the bounds with perfect privacy case.
 \begin{example}
 	Let RV $X\in\{1,..,12\}$ with distribution $P_X(x)=0.05,\ x\in\{1,..,10\}$ and $P_X(x)=0.475,\ x\in\{11,12\}$ be correlated with $Y$ where $Y|X=x$ is a $\text{BSC}(0.5)$. To calculate the bound in Theorem~\ref{the9} one possible separation of $X$ is $(X_1,X_2)$ where distribution of $X_1$ and $X_2$ are $P_{X_1}(x_1)=0.01,\ x\in\{1,..,5\}$, $P_{X_1}(x_1)=0.95,\ x_1=6$ and $P_{X_2}(x_2)=0.5, \ x\in\{1,2\}$. In this case, $H(X_1)=0.4025$ and $\log(|\mathcal{X}_2|)=1$. 
 	Using the bound \eqref{koskoon}, for all $\epsilon\geq 0.4025$ and key size $M\geq 1$ we have 
 	\begin{align*}
 	&\mathbb{L}^b(P_{XY},M,\epsilon)\leq\nonumber\\ & \!\!\!\sum_{x\in\mathcal{X}}\!\!H(Y|X=x)\!+H(X_1)\!+\!2+\!\ceil{\log (|\mathcal{X}_2|)}=15.45\ \text{bits},
 	\end{align*}
 	By using the bound \cite[(95)]{kostala}, for the key size $M\geq 12$ we have
 	\begin{align*}
 	\mathbb{L}^b(P_{XY},M,0)\leq 17\ \text{bits}. 
 	\end{align*}
 	Note that since $|\mathcal{Y}|=2$ we can also use the following bound instead of \eqref{koskoon6}
 	\begin{align*}
 	&\mathbb{L}^b(P_{XY},M,\epsilon)\leq
 	\ceil{\log\left(|\mathcal{X}|(|\mathcal{Y}|-1)+1\right)}+2\\&+\min_{(X_1,X_2)\in\mathcal{S}_X^{1\epsilon}}\left\{\ceil{\log (|\mathcal{X}_2|)}+H(X_1)\right\}.
 	\end{align*}
 	By checking all possible separations, the minimum of $\ceil{\log (|\mathcal{X}_2|)}+H(X_1)$ occurs at $|\mathcal{X}_1|=6$, and $|\mathcal{X}_2|=2$ where $X_1$ and $X_2$ have the distributions mentioned before.
 	Thus, for all $\epsilon\geq 0.4025$ and key size $M\geq 1$ we have
 	$
 	\mathbb{L}^b(P_{XY},M,\epsilon)\leq 7.4025\ \text{bits}. 
 	$
 	By using the bound \cite[(96)]{kostala}, for the key size $M\geq 12$ we have
 	$
 	\mathbb{L}^b(P_{XY},M,0)\leq 9\ \text{bits}. 
 	$	 
 	We can see that by allowing small amount of leakage with shorter secret key size we are able to send shorter codewords.
 \end{example}
 \textbf{\emph{Perfect privacy ($\epsilon=0$):}}
 In this part, we let $\epsilon=0$ and consider the problem as in \cite{kostala}. We show that when $X=(X_1,X_2)$ where $X_2$ is a deterministic function of $X_1$ the upper bounds on $\mathbb{L}^b(P_{XY},M,0)$ can be improved by using shorter shared key size, i.e., $M<|\mathcal{X}|$. In the next theorem we provide upper bounds on $\mathbb{L}^b(P_{XY},M,0)$.
 \begin{theorem}\label{the10}
 	Let $X=(X_1,X_2)$ where $X_2=f(X_1)$ and the shared key size be $|\mathcal{X}_1|$. We have
 	\begin{align}
 	&\mathbb{L}^b(P_{XY},|\mathcal{X}_1|,0)\leq\nonumber\\ 
 	&\!\!\!\sum_{x_1\in\mathcal{X}_1}\!\!H(Y|X_1=x_1)\!+\!1\!+\!\ceil{\log (|\mathcal{X}_1|)},\label{koskoon10}
 	\end{align}
 	if $|\mathcal{Y}|$ is finite, then 
 	\begin{align}
 	&\mathbb{L}^b(P_{XY},|\mathcal{X}_1|,0)\nonumber\\& \leq \ceil{\log\left(|\mathcal{X}|(|\mathcal{Y}|-1)+1\right)}+\ceil{\log (|\mathcal{X}_1|)}+1,\label{koskoon11}
 	\end{align}
 	and if $X=(X_1,X_2)$ is deterministic function of $Y$, we get
 	\begin{align}
 	\mathbb{L}^b(P_{XY},|\mathcal{X}_1|,0) \leq \ceil{\log\left(|\mathcal{Y}|-|\mathcal{X}|+1\right)}\!+\!\ceil{\log (|\mathcal{X}_1|)}\!+\!1.\label{koskoon12}
 	\end{align}
 \end{theorem}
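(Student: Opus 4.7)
The plan is to adapt the two-part construction used in the proof of Theorem~\ref{the7}, exploiting the fact that $X_2 = f(X_1)$ to avoid spending any bits (or key) on $X_2$. First I will apply FRL \cite[Lemma~1]{kostala} to the pair $(X,Y)$ to obtain an auxiliary RV $U$ satisfying $I(U;X_1,X_2) = 0$ and $H(Y\mid U, X_1, X_2) = 0$. Next, I will encode $X_1$ by one-time-pad coding \cite[Lemma~1]{kostala2} using the shared key of size $|\mathcal{X}_1|$, producing $\tilde{X}_1$ that is independent of $(X_1, X_2, U)$ and uses exactly $\lceil \log|\mathcal{X}_1|\rceil$ bits. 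Finally, I will compress $U$ by any standard prefix-free lossless code, using at most $H(U)+1$ bits. The transmitted codeword is the concatenation of $\tilde{X}_1$ and the code for $U$.

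On the decoding side, the receiver first recovers $X_1$ from $\tilde{X}_1$ via the shared key, then computes $X_2 = f(X_1)$, then decodes $U$, and finally reconstructs $Y$ from $(U, X_1, X_2)$ using $H(Y\mid U, X_1, X_2) = 0$; this gives losslessness. For the privacy constraint, I will verify
\begin{align*}
I(\tilde{X}_1, U; X_1, X_2) = I(\tilde{X}_1; X_1, X_2) + I(U; X_1, X_2 \mid \tilde{X}_1) = 0,
\end{align*}
where the first term vanishes by the one-time-pad property and the second by independence of $U$ from $X$ combined with independence of $\tilde X_1$ from $(U, X_1, X_2)$. Hence the construction is $0$-private.

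To obtain bound~\eqref{koskoon10}, I will apply \cite[Lemma~2]{kostala} to get $H(U) \leq \sum_{x\in\mathcal{X}, P_X(x)>0} H(Y\mid X = x)$; since $X_2 = f(X_1)$, only pairs $(x_1, f(x_1))$ occur with positive probability, so the sum collapses to $\sum_{x_1\in\mathcal{X}_1} H(Y\mid X_1 = x_1)$. Combining with the $\lceil\log|\mathcal{X}_1|\rceil$ bits for $\tilde X_1$ and the $+1$ overhead of the lossless code for $U$ yields~\eqref{koskoon10}. For~\eqref{koskoon11}, I will instead use the cardinality bound $|\mathcal{U}| \leq |\mathcal{X}|(|\mathcal{Y}|-1)+1$ from FRL, giving $H(U) \leq \log|\mathcal{U}|$ and hence $\lceil\log(|\mathcal{X}|(|\mathcal{Y}|-1)+1)\rceil$ bits suffice to describe $U$. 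For~\eqref{koskoon12}, under the additional assumption that $X$ is a deterministic function of $Y$, I will apply the refined cardinality bound $|\mathcal{U}| \leq |\mathcal{Y}| - |\mathcal{X}| + 1$ analogous to~\eqref{prove}.

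The main subtlety, and the step I expect to require most care, is the privacy verification: one must check that the one-time-pad output $\tilde{X}_1$ can be generated to be jointly independent of $(U, X_1, X_2)$, not merely marginally independent of $X_1$. This is handled by drawing the key $W$ independently of everything else and defining $\tilde X_1 = X_1 \oplus W$ in the appropriate group; since $W$ is uniform and independent of $(X_1, X_2, U)$, so is $\tilde X_1$. The rest of the argument is a routine chaining of FRL and standard lossless coding overheads, parallel to Theorem~\ref{the7} but with $X_2$ absent from the transmitted description.
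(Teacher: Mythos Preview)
Your proposal is correct and follows essentially the same two-part construction as the paper's proof: apply FRL to $(X,Y)$ to obtain $U$, one-time-pad $X_1$ to get $\tilde X_1$, transmit $(\tilde X_1,U)$, and recover $Y$ from $(X_1,U)$ using $X_2=f(X_1)$; the three bounds then follow from \cite[Lemma~2]{kostala} and the FRL cardinality bounds exactly as you describe. Your explicit justification that the sum $\sum_{x}H(Y\mid X=x)$ collapses to $\sum_{x_1}H(Y\mid X_1=x_1)$, and your care in ensuring $\tilde X_1$ is jointly independent of $(U,X_1,X_2)$, are in fact cleaner than the paper's own presentation of the same steps.
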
  
 \begin{proof}
 	For proving \eqref{koskoon10}, let $U$ be produced based on FRL \cite[Lemma~1]{kostala} with assignments $X\leftarrow (X_1,X_2)$ and $Y\leftarrow Y$, and encode it using any traditional code which uses at most $H(U)+1$ bits. Thus, $I(U;X_1,X_2)=0$ and $H(Y|X_1,X_2,U)=0$. Since $X_2=f(X_1)$, we have 
 	\begin{align}\label{ann}
 	H(Y|X_1,X_2,U)=H(X_1,U)=0.
 	\end{align}
 	Using \cite[Lemma~2]{kostala}, we have 
 	$
 	H(U)\leq \sum_x H(Y|X=x).
 	$
 	We encode $X_1$ by one-time-pad coding which uses $\ceil{\log(|\mathcal{X}_1|)}$ bits. Let the output of one-time-pad coding be $\tilde{X}_1$. We send compressed $\tilde{X}_1$ and $U$ over the channel. 
 	Using the upper bound on $H(U)$ and $\tilde{X}_1$ we have
 	$
 	\mathbb{L}^b(P_{XY},|\mathcal{X}_1|,0)\leq\sum_{x\in\mathcal{X}}\!\!H(Y|X=x)+1+\ceil{\log (|\mathcal{X}_1|)}.
 	$
 	At receiver side, using the shared key, first $X_1$ is decoded and we then use \eqref{ann} to decode $Y$ using $X_1$ and $U$. For the leakage constraint, independency between $\tilde{X}_1$ and $X_1$ implies $I(X_1,X_2;\tilde{X}_1)=0$, and the later follows since $X_2=f(X_1)$. Furthermore, we choose $\tilde{X}_1$ to be independent of $U$. Thus, 
 	\begin{align*}
 	I(\tilde{X}_1,U;X_1,X_2)=I(\tilde{X}_1,U;X_1)=I(\tilde{X}_1;X_1)=0.
 	\end{align*}
 	The bounds \eqref{koskoon11} and \eqref{koskoon12} can be obtained using \cite[(31)]{kostala} and \cite[(41)]{kostala}.  
 \end{proof}
 \begin{remark}
 	Clearly, when $X=(X_1,X_2)$ with $X_2=f(X_1)$ the upper bounds \eqref{koskoon10}, \eqref{koskoon11} and \eqref{koskoon12} improve the bounds in \cite[Theorem~8]{kostala}.
 \end{remark}
 \textbf{\emph{Special case: $X=(X_1,X_2)$ and $X_2=f(X_1)$: }}
 In this part, we use similar approaches as used in Theorem~\ref{the9} and Theorem~\ref{the10} to find upper bounds. We encode $X_1$ using one-time-pad coding and same $U$ as in Theorem~~\ref{the10}. As shown in Theorem~\ref{the10}, $X_1$ is sufficient to decode $Y$ since $X_2=f(X_1)$, however in this scheme we do not use the possibility that we are allowed to leak about $X$. Similar to Theorem~\ref{the10}, we obtain
 $
 \mathbb{L}^b(P_{XY},|\mathcal{X}_1|,\epsilon)\leq\sum_{x\in\mathcal{X}}\!\!H(Y|X=x)\!+\!1\!+\!\ceil{\log (|\mathcal{X}_1|)}=
 \sum_{x_1\in\mathcal{X}_1}\!\!H(Y|X_1=x_1)\!+\!1\!+\!\ceil{\log (|\mathcal{X}_1|)}.
 $
 Next, we use the separation technique as used in Theorem~\ref{the9}. For any $\epsilon\geq 0$ if a separation exist such that $\epsilon\geq H(X_1')$, where $X_1=(X_1',X_1'')$, we obtain
 $
 \mathbb{L}^b(P_{XY},|\mathcal{X}''|,\epsilon)\leq\sum_{x_1\in\mathcal{X}_1}\!\!H(Y|X_1=x_1)+H(X_1')+2+\ceil{\log (|\mathcal{X}_1''|)}.
 $
 Furthermore, we can take minimum over all possible separations to improve the upper bound. Let $\mathcal{S}_{X_1}^\epsilon$ be the all possible separations where $\epsilon\geq H(X_1')$. We have
 \begin{align*}
 &\mathbb{L}^b(P_{XY},\alpha,\epsilon)\leq\nonumber\\ & \!\!\!\sum_{x_1\in\mathcal{X}_1}\!\!\!\!\!H(Y|X_1=x_1\!)\!+\!2\!+\!\!\!\!\!\!\min_{(X_1',X_1'')\in\mathcal{S}_{X_1}^{\epsilon}}\!\!\!\!\!\!\left\{\!H(X_1')\!+\!\ceil{\log (|\mathcal{X}_1''|)}\!\right\},
 \end{align*}
 where $\alpha = |\argmin_{X_1'':(X_1',X_1'')\in \mathcal{S}_{X_1}^{\epsilon}} H(X_1')+\ceil{\log (|\mathcal{X}_1''|)}|<|\mathcal{X}_1|$. \\
 We can see that the upper bound on $\mathbb{L}^b(P_{XY},|\mathcal{X}''|,\epsilon)$ can be less than the bound in \eqref{koskoon10}. For instance, let $H(X_1')+2\leq \log(|\mathcal{X}'_1|)$. Using the same arguments as in Remark~\ref{tt} we have $H(X_1')+1+\ceil{\log (|\mathcal{X}_1''|)}\leq \ceil{\log (|\mathcal{X}_1|)}$. 
 Hence, if the leakage is more than a threshold it can help us to improve the upper bound and send shorter codewords compared to the perfect privacy case. Moreover, it helps us to use shorter shared secret key size. 
\bibliographystyle{IEEEtran}
\bibliography{IEEEabrv,IZS}
%\newpage
\section*{Appendix A}
\textbf{\emph{Proof of Lemma~\ref{EFRL}:}}
The proof of \eqref{prove} is based on the construction used in \cite[Lemma~3]{king1} and \cite[Lemma~1]{kostala}. When $X$ is a deterministic function of $Y$, $\tilde{U}$ is found by FRL \cite[Lemma~1]{kostala} as it is used in 
\cite[Lemma~3]{king1} satisfies $|\mathcal{\tilde{U}}|\leq |\mathcal{Y}|-|\mathcal{X}|+1$. Using the construction as in \cite[Lemma~3]{king1} we have $|\mathcal{U}|\leq (|\mathcal{Y}|-|\mathcal{X}|+1)(|\mathcal{X}|+1)$.

\textbf{\emph{Proof of Lemma~\ref{converse}:}}
First we obtain $L_3(\epsilon,P_{XY})$. Using the key equation \cite[(7)]{king1} with assignments $X\leftarrow X$ and $Y\leftarrow(Y,W)$, for any RVs $X$, $Y$, $W$, and $U$, we have
\begin{align}\label{koskesh}
I(U;Y,W)&=I(U;X)+H(Y,W|X)\nonumber\\&-H(Y,W|X,U)-I(X;U|Y,W).
\end{align}
Using \eqref{koskesh} if $U$ satisfies $I(U;X)= \epsilon$ and $H(Y|U,X,W)=0$ and $W$ is independent of $(X,Y)$, we have
\begin{align*}
H(U)&\geq I(U;Y,W)\\&=I(U;X)+H(Y|X)+H(W|X,Y)\\&-H(W|X,U)-H(Y|X,U,W)-H(X|Y,W)\\&+H(X|W,Y,U)\\&\stackrel{(a)}{=}\epsilon+H(Y|X)+H(W)-H(W|X,U)\\&-H(X|Y)+H(X|Y,U,W)\\&\stackrel{(b)}{\geq} \epsilon+H(Y|X)-H(X|Y),
\end{align*} 
where in (a) we used $I(X;U)=\epsilon$, independence of $W$ and $(X,Y)$, and $H(Y|X,U,W)=0$. Step (b) follows from the non-negativity of $H(W)-H(W|X,U)=I(W;X,U)$ and $H(X|Y,U,W)$. 
%\begin{align*}
%H(U)&\geq I(U;Y)\\ &= I(U;X)\!+\!H(Y|X)\!-\!I(X;U|Y)\!-\!H(Y|X,U)\\&\geq \epsilon+H(Y|X)-H(X|Y).
%\end{align*}
%Note that since $U$ is found by EFRL it satisfies $H(Y|X,U)=0$ which leads to $H(Y|X,U,W)=0$, i.e., $U$ satisfies \eqref{sok}.

For proving 
$L_2(\epsilon,P_{XY})$ note that we have
\begin{align}\label{tokhm}
H(U)\geq \min_x H(U|X=x)+\epsilon,
\end{align}
since 
$
H(U)\!-\!\min_x H(U|X\!\!=\!x)\!\geq\! H(U)\!-\!\!\sum_x\! P_X(x)H(U|X\!\!=\!x)
=\epsilon.
$
Furthermore,
\begin{align}
H(U|X=x)&\geq H(U|W,X=x)\label{jama}\\ &\stackrel{(a)}{\geq} H(U|W,X=x)+H(Y|U,W,X=x)\nonumber\\&= H(U,Y|W,X=x)\nonumber\\&=H(Y|W,X=x)+H(U|Y,W,X=x)\nonumber\\&\stackrel{(b)}{\geq} H(Y|W,X=x)=H(Y|X=x), \label{jata}
\end{align}
where (a) used the fact that $H(Y|U,W,X=x)=0$ since $H(Y|U,W,X)=0$. (b) follows since $W$ is independent of $X$ and $Y$. Taking the minimum at both sides we get $\min_x H(U|X=x)\geq \min_x H(Y|X=x)$. Combining it with \eqref{tokhm} we obtain $L_2(\epsilon,P_{XY})$. 
To obtain $L_1(\epsilon,P_{XY})$, we take the average from both sides of \eqref{jama} and \eqref{jata}. This gives us
$
H(U)\geq H(U|X)\geq H(Y|X),
$
which completes the proof.\\
\textbf{\emph{Proof of Theorem~\ref{the1}:}}
For proving \eqref{koon}, we use two-part construction coding scheme as used in \cite{kostala}. First, the private data $X$ is encoded by one-time-pad coding \cite[Lemma~1]{kostala2}, which uses $\ceil{\log(|\mathcal{X}|)}$ bits. Next, we produce $U$ based on EFRL \cite[Lemma~3]{king1} and encode it using any traditional code which uses at most $H(U)+1$ bits. By using the upper bound on $H(U)$ found in \cite[Lemma~5]{king1} we obtain 
$
\mathbb{L}(P_{XY},|\mathcal{X}|,\epsilon)\leq \sum_{x\in\mathcal{X}}\!\!H(Y|X=x)\!+\!\epsilon\!+\!h(\alpha)\!+\!1+\!\ceil{\log (|\mathcal{X}|)}.
$
For proving \eqref{koon2} and \eqref{goh} we use the same coding and note that when $|\mathcal{Y}|$ is finite we use the bound in \eqref{prove}, which results in \eqref{koon2}. Furthermore, when $X$ is a deterministic function of $Y$ we can use the bound in \eqref{prove} that leads to \eqref{goh}. Moreover, for the leakage constraint we note that the randomness of one-time-pad coding is independent of $X$ and the output of the EFRL. Let $Z$ be the compressed output of the EFRL and $\tilde{X}$ be the output of the one-time-pad coding. Then, we have
$
I(X;\tilde{X},Z)=I(X;\tilde{X})+I(X;Z|\tilde{X})=I(X;Z)=\epsilon,
$
where we used the independency between $\tilde{X}$ and $(Z,X)$.\\
\textbf{\emph{Proof of Theorem~\ref{the2}:}}
Let $\tilde{U}=\mathcal{C}(Y,W)$ be the output of the encoder. Since the code is $\epsilon$-private it satisfies $I(U;X)= \epsilon$. Due to recoverability, we can recover the useful data, it also satisfies $H(Y|U,X,W)=0$. Thus, by using Lemma~\ref{converse} we have
$
H(\tilde{U})\geq \max\{L_1(\epsilon,P_{XY}),L_2(\epsilon,P_{XY}),L_3(\epsilon,P_{XY})\},
$
which results in \eqref{23}. The proof of \eqref{ma} is similar to \cite[Theorem~9]{kostala}. For any $\tilde{u}\in \xi$ we have
\begin{align*}
P_{\tilde{U}}(\tilde{u})&=\sum_x P_X(x)P_{\tilde{U}|X}(\tilde{u}|x)\\&\leq \max_xP_X(x)\sum_x P_{\tilde{U}|X}(\tilde{u}|x)\\&=\max_xP_X(x) \sum_x\sum_w P_{\tilde{U}|WX}(\tilde{u}|w,x)P_{W}(w)
\\&=\max_xP_X(x)\sum_w P_{W}(w) \sum_x P_{\tilde{U}|WX}(\tilde{u}|w,x)\\ &\stackrel{(a)}{\leq}\max_xP_X(x)\sum_w P_{W}(w)=\max_xP_X(x),
\end{align*}
where (a) follows by the lossless condition, since as argued in \cite[Theorem~9]{kostala}, for a lossless code $P_{\tilde{U}|WX}(\tilde{u}|w,x)>0$ for at most one $x\in\cal X$. Hence, we have
$
H(\tilde{U})\geq \log(\frac{1}{\max_x P_X(x)}).
$

\end{document}